\documentclass{article}
\usepackage{Sub/spconf,amsmath,graphicx,hyperref}

\usepackage{balance}
\usepackage{epstopdf}

\usepackage{amsmath,amssymb,epsfig,cite,subfigure}
\include{macros}
\usepackage{graphicx}
\usepackage{epstopdf}

\usepackage{acronym}

\usepackage{bm}

\usepackage{pifont}

\usepackage{tikz}
\usetikzlibrary{calc}

\usepackage{tcolorbox}

\usepackage{accents}

\allowdisplaybreaks

\usepackage{soul}

\usepackage{amsthm}

\usepackage{nicefrac}

\usepackage{lipsum,multicol}

\usepackage{algorithm}
\usepackage{algpseudocode}

\algdef{SE}[SUBALG]{Indent}{EndIndent}{}{\algorithmicend\ }%
\algtext*{Indent}
\algtext*{EndIndent}

\usepackage{enumitem}
\usepackage{mwe}    
\usepackage{epsfig}
\usepackage{subfigure}
\usepackage{color}
\usepackage{url}
\usepackage{mathtools,xparse}

\usepackage{gensymb}

\usepackage[multiple]{footmisc}

\usepackage{xr}

\makeatletter
\newcommand*{\addFileDependency}[1]{
  \typeout{(#1)}
  \@addtofilelist{#1}
  \IfFileExists{#1}{}{\typeout{No file #1.}}
}
\makeatother

\newcommand*{\myexternaldocument}[1]{%
    \externaldocument{#1}%
    \addFileDependency{#1.tex}%
    \addFileDependency{#1.aux}%
}

\myexternaldocument{supp}

\usepackage{xcolor}

\newcommand\abss[1]{\lvert#1\rvert}

\newcommand{\thn}[1]{ {#1^{\rm{th} } } }

\newcommand{\Eee}{\mathbb{E}}

\newcommand{\snr}{{\rm{SNR}}}

\newcommand{\pd}{P_{\rm{d}}}

\newcommand{\snri}{\snr^{-1}}

\newcommand{\sigmax}{\sigma_x}

\newcommand{\sigmaa}{\sigma_{\alpha_k}}
\newcommand{\sigmaaa}{\sigma_{\alpha}}

\newcommand{\xdelta}{x_{\Delta}}
\newcommand{\xhat}{\widehat{x}}

\newcommand{\Tsym}{ T_{\rm{sym}} }

\newcommand{\Tcp}{ T_{\rm{cp}} }

\newcommand{\deltaf}{ \Delta f }

\newcommand{\fc}{ f_c }
\newcommand{\Ntx}{ N_{\rm{T}} }

\newcommand{\atx}{ \aaa_{\rm{T}} }

\newcommand{\HHhatmmse}{ \widehat{\HH}^{\rm{MMSE}} }

\newcommand{\hhatmmse}{\widehat{h}^{\rm{MMSE}}}

\newcommand{\znm}{ z_{n,m} }

\newcommand{\fftx}{ \ff_{\rm{T}} }

\newcommand{\mtCN}{{\mathcal{CN}}}

\newcommand{\vecc}[1]{ {\rm{vec}}\left(#1\right)  }

\newcommand{\Imatrix}{{ \boldsymbol{\mathrm{I}} }}

\newcommand{\aaa}{\mathbf{a}}
\newcommand{\cc}{ \mathbf{c} }
\newcommand{\bb}{ \mathbf{b} }

\newcommand{\boldzero}{{ {\boldsymbol{0}} }}

\newcommand{\Pt}{{ P_{\rm{T}} }}

\newcommand{\sigmarcs}{{ \sigma_{{\rm{rcs}}} }}

\newcommand{\sigmarcsmin}{{ \sigma_{{\rm{rcs, min}}} }}

\newcommand{\ff}{\mathbf{f}}

\newcommand{\YY}{ \mathbf{Y} }

\newcommand{\HH}{ \mathbf{H} }

\newcommand{\XX}{ \mathbf{X} }

\newcommand{\ZZ}{ \mathbf{Z} }

\newcommand{\boldG}{ \mathbf{G} }

\newcommand{\pslrrf}{\pslr^{\rm{RF}}}
\newcommand{\pslrmf}{\pslr^{\rm{MF}}}

\newcommand{\pslr}{{\rm{PSLR}}}

\newcommand{\Rside}{\mathcal{R}}

\newcommand{\pp}{ \mathbf{p} }

\newcommand{\trp}{\mathsf{T}}

\newcommand{\cset}[2]{ \mathbb{C}^{#1 \times #2}  }

\newcommand{\conj}{ {\ast} }

\newcommand{\XXhat}{ \widehat{\mathbf{X}} }
\newcommand{\XXdelta}{ {\mathbf{X}}_{\Delta} }

\makeatletter \renewcommand\d[1]{\ensuremath{%
		\;\mathrm{d}#1\@ifnextchar\d{\!}{}}}
\makeatother

\makeatletter
\newcommand*\rel@kern[1]{\kern#1\dimexpr\macc@kerna}
\newcommand*\widebar[1]{%
  \begingroup
  \def\mathaccent##1##2{%
    \rel@kern{0.8}%
    \overline{\rel@kern{-0.8}\macc@nucleus\rel@kern{0.2}}%
    \rel@kern{-0.2}%
  }%
  \macc@depth\@ne
  \let\math@bgroup\@empty \let\math@egroup\macc@set@skewchar
  \mathsurround\z@ \frozen@everymath{\mathgroup\macc@group\relax}%
  \macc@set@skewchar\relax
  \let\mathaccentV\macc@nested@a
  \macc@nested@a\relax111{#1}%
  \endgroup
}
\makeatother

\theoremstyle{remark}

\newtheoremstyle{mytheoremstyle} 
    {\topsep}                    
    {\topsep}                    
    {\upshape}                   
    {.5em}                           
    {\itshape}                   
    {.}                          
    {.5em}                       
    {}  

\theoremstyle{plain}

\newtheoremstyle{iremark}
  {\topsep}   
  {\topsep}   
  {\upshape}  
  {0.2in}       
  {\itshape}  
  {.}         
  {5pt plus 1pt minus 1pt} 
  {\thmname{#1}\thmnumber{ \itshape#2}\thmnote{ (#3)}} 

\newtheorem{proposition}{Proposition}

\theoremstyle{definition}
\newtheorem*{proof}{Proof}

\theoremstyle{definition}
\newtheorem{remark}{Remark}

\newtheoremstyle{nopunctuation}
  {\topsep}   
  {\topsep}   
  {\itshape}  
  {}          
  {\bfseries} 
  {}          
  { }         
  {\thmname{#1}\ \thmnumber{#2}\ \thmnote{#3}} 

\theoremstyle{italic}
\newtheorem{corollary}{Corollary} 
\acrodef{RIS}{reconfigurable intelligent surface}
\acrodef{RCS}{radar cross section}
\acrodef{SNR}{signal-to-noise ratio}
\acrodef{ISAC}{integrated sensing and communication}
\acrodef{ISLAC}{integrated sensing, localization, and communication}
\acrodef{LOS}{line-of-sight}
\acrodef{NLOS}{non-line-of-sight}
\acrodef{AOA}{angle-of-arrival}
\acrodef{AOD}{angle-of-departure}
\acrodef{UE}{user equipment}
\acrodef{NF}{near-field}
\acrodef{BS}{base station}
\acrodef{MCRB}{misspecified Cram\'{e}r-Rao bound}
\acrodef{CRB}{Cram\'{e}r-Rao bound}
\acrodef{LB}{lower bound}
\acrodef{ML}{maximum-likelihood}
\acrodef{MML}{mismatched maximum-likelihood}
\acrodef{DL}{downlink}
\acrodef{UL}{uplink}
\acrodef{MIMO}{multiple-input multiple-output}
\acrodef{MISO}{multiple-input single-output}
\acrodef{SISO}{single-input single-output}
\acrodef{SIP}{shift invariance property}
\acrodef{FIM}{Fisher information matrix}
\acrodef{RMSE}{root mean-squared error}
\acrodef{AWGN}{additive white Gaussian noise}
\acrodef{ADMM}{alternating direction method of multipliers}
\acrodef{LS}{least-squares}
\acrodef{SOC}{second-order cone}
\acrodef{DoF}{degree-of-freedom}
\acrodef{CFO}{carrier frequency offset}
\acrodef{TO}{timing offset}
\acrodef{ICI}{intercarrier interference}
\acrodef{i.i.d.}{independently and identically distributed}
\acrodef{MI}{mutual information}
\acrodef{SAC}[S\&C]{sensing and communication}

\acrodef{ULA}{uniform linear array}

\acrodef{SCE}{sensing channel estimator}
\acrodef{RF}{reciprocal filtering}
\acrodef{MF}{matched filtering}
\acrodef{LMMSE}{linear minimum mean-squared-error}
\acrodef{MMSE}{minimum mean-squared-error}
\acrodef{BICM}{bit-interleaved coded modulation}

\acrodef{LLR}{log-likelihood ratio}

\acrodef{SSB}{synchronization signal block}
\acrodef{DMRS}{demodulation reference signal}
\acrodef{PTRS}{phase tracking reference signal}

\acrodef{LO}{local oscillator}
\acrodef{PPM}{parts-per-million} 

\acrodef{DFT}{discrete Fourier transform} 
\acrodef{PSD}{power spectral density}

\acrodef{PSLR}{peak-to-sidelobe level ratio}
\acrodef{DPU}{data payload uncertainty}
\acrodef{CP}{cyclic prefix}
\acrodef{PCS}{probabilistic constellation shaping}
\acrodef{CFAR}{constant false alarm rate}

\allowdisplaybreaks

\allowdisplaybreaks

\makeatletter
\renewcommand\section{\@startsection{section}{1}{\z@}%
 {1.6ex plus .3ex minus .2ex}%
 {.75ex plus .2ex}%
 {\normalfont\normalsize\bfseries}}
\renewcommand\subsection{\@startsection{subsection}{2}{\z@}%
 {1.1ex plus .25ex minus .15ex}%
 {.5ex plus .15ex}%
 {\normalfont\normalsize\bfseries}}
\def\abstract{\begin{center}
 {\bf ABSTRACT\vspace{-.15em}\vspace{0pt}}
 \end{center}}

\long\def\@makecaption#1#2{%
 \vskip 3pt%
 \setbox\@tempboxa\hbox{#1. #2}%
 \ifdim \wd\@tempboxa >\hsize #1. #2\par \else
 \hbox to\hsize{\hfil\box\@tempboxa\hfil}%
 \fi}
\def\thebibliography#1{\section{References}\footnotesize\list
 {[\arabic{enumi}]}{\settowidth\labelwidth{[#1]}\leftmargin\labelwidth
 \advance\leftmargin\labelsep
 \usecounter{enumi}
 \setlength{\itemsep}{0pt}
 \setlength{\parsep}{0pt}
 \setlength{\parskip}{0pt}}
 \def\newblock{\hskip .11em plus .33em minus .07em}
 \sloppy\clubpenalty4000\widowpenalty4000
 \sfcode`\.=1000\relax}
\makeatother

\title{Bistatic ISAC under Data Payload Uncertainty}

\name{ \begin{tabular}{c}
     Musa Furkan Keskin$^{\ast}$,  Silvia Mura$^{\star}$, Marouan Mizmizi$^{\star}$,   Dario Tagliaferri$^{\star}$, Fan Liu$^{\dagger}$, Henk Wymeersch$^{\ast}$ 
\end{tabular} \thanks{This work is supported by the Swedish Research Council (VR) through the project 6G-PERCEF under Grant 2024-04390, and by the SNS JU project 6G-DISAC under the EU’s Horizon Europe research and innovation programme under Grant Agreement No 101139130.}}

\address{$^{\ast}$ \small Department of Electrical Engineering, Chalmers University of Technology, Sweden \\ $^{\star}$ \small Department of Electronics, Information and Bioengineering, Politecnico di Milano, Italy \\ \small
$^{\dagger}$ School of Information Science and Engineering, Southeast University, China}

\begin{document}
\ninept
%
\maketitle
\begin{abstract}
We investigate the problem of \ac{DPU} in bistatic integrated sensing and communication (ISAC) systems, which refers to imperfect knowledge of transmit data symbols at the sensing receiver. DPU may result from limited fronthaul capacity in base station (BS)-to-BS bistatic sensing or data demodulation errors in BS-to-user equipment (UE) bistatic sensing, and can impair target detection through elevated sidelobe levels in the delay-Doppler domain. To characterize this effect, we derive closed-form expressions for the \ac{PSLR} of two common OFDM receivers, namely \ac{RF} and \ac{MF}, under \ac{DPU}. The analysis shows that the \ac{DPU}-induced sidelobe degradation of \ac{RF} depends on the inverse second-, fourth-, and sixth-order constellation moments, whereas that of \ac{MF} depends only on the direct fourth-order moment, making \ac{RF} considerably more vulnerable to \ac{DPU}. To mitigate the resulting sidelobes, we further propose a novel, \ac{DPU}-aware \ac{MMSE} receiver. 
Simulation results validate the \ac{PSLR} analysis and demonstrate superior detection performance of \ac{MMSE} over \ac{RF} and \ac{MF} while revealing the extent to which it can recover \ac{DPU}-induced performance loss relative to a genie-aided receiver with perfect knowledge of data payload.
\end{abstract}
\acresetall
\begin{keywords}
OFDM, bistatic ISAC, data payload uncertainty, distributed ISAC, payload-based sensing.
\end{keywords}
\section{Introduction}
\label{sec_intro}

Integrated sensing and communication (ISAC) enables simultaneous data transmission and environmental sensing, and is envisioned as a key component of 6G wireless networks \cite{Fan_ISAC_6G_JSAC_2022}. While \textit{monostatic ISAC}, with co-located transmitter and receiver, has been extensively studied \cite{5G_NR_JRC_analysis_JSAC_2022,ISAC_OFDM_Pilots,Liu_Reshaping_OFDM_ISAC_2023}, \textit{bistatic ISAC} with spatially separated transmit and receive nodes has recently gained increasing attention \cite{bistatic_OFDM_ISAC_OTA_2024,bistaticISAC_Clock_Async_2024,bistatic_ISAC_joint_2024,bistatic_data_payload_ICC_2026}. Bistatic operation avoids full-duplex transceiver complexity while extending sensing coverage and providing spatial diversity for target detection \cite{wavefieldSensing_2025,bistaticISAC_Clock_Async_2024}.

Sensing in communication systems has conventionally relied on deterministic pilot signals. However, their limited time-frequency occupancy restricts the sensing SNR and their sparse placement can introduce grating lobes or reduce delay-Doppler resolution \cite{Fan_CommStandMag_2026_Payload,ISAC_OFDM_Pilots}. Payload-based sensing overcomes these limitations by exploiting communication data in addition to pilots and has shown substantial potential in monostatic ISAC, where the transmitted symbols are readily available at the co-located sensing receiver \cite{sensWithComm_Fan_JSAC_2026,holisticISAC_2024,Kaitao_ISAC_TSP_2026}. Extending this approach to bistatic sensing is less straightforward since the spatially separated receiver does not have access to the transmitted data payload. Existing bistatic approaches therefore exploit the prior distribution of the constellation instead of the symbols themselves, jointly processing deterministic pilots and unknown random payloads to improve detection over pilot-only sensing \cite{fan_bistatic_pilot_random_data_2025,bistatic_data_payload_ICC_2026}.

Between perfectly known data in monostatic ISAC and entirely unknown data in bistatic ISAC lies a practically important regime where the sensing receiver has only \textit{imperfect} knowledge of the transmitted symbols, hereafter referred to as \ac{DPU}. Such uncertainty can arise in two representative bistatic scenarios, illustrated in Fig.~\ref{fig_scenario}. In BS-to-BS sensing, limited fronthaul capacity may require compression and quantization of the data conveyed from the transmitting BS to remote sensing nodes, which inevitably introduces errors in data symbols \cite{limited_fronthaul_2025,eckrich_fronthaul_radar_2024,zhang_fronthaul_beamforming_ISAC_2024}. In BS-to-UE sensing, data-aided iterative channel estimation relies on demodulated payload symbols that may contain decision errors \cite{bistatic_ISAC_sync_2023,bistatic_OFDM_ISAC_OTA_2024,bistaticISAC_conference_2025,bistatic_sensing_5G_NR_2025}. The detrimental effect of such uncertainty has already been observed in OFDM-based passive radar studies, where demodulation errors in the reference signal have been shown to increase sidelobe levels and impair target detection \cite{OFDM_Passive_Res_2017_TSP}.

\begin{figure}[t]
	\centering
	\includegraphics[width=0.8\linewidth]{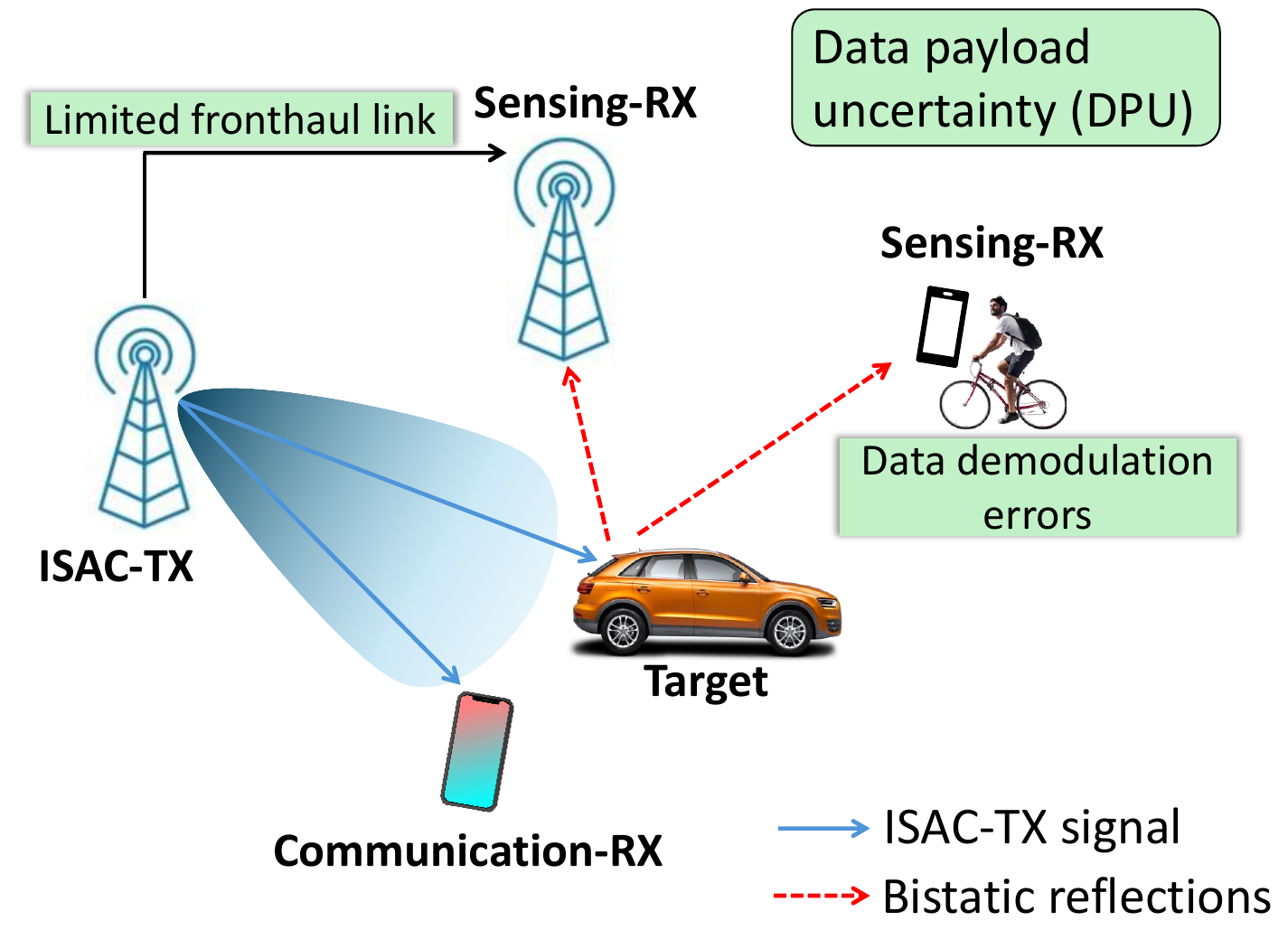}
	\caption{\footnotesize Bistatic ISAC scenario where sensing receivers perform target detection under \ac{DPU}, resulting from limited fronthaul capacity (BS-to-BS bistatic sensing) or data demodulation errors (BS-to-UE bistatic sensing).} 
	\label{fig_scenario}
\end{figure}

Despite its practical relevance, the effect of \ac{DPU} on the sidelobe behavior of OFDM \acp{SCE} remains largely unexplored. Existing analyses of common \acp{SCE}, including \ac{RF} and \ac{MF}, generally assume perfect knowledge of the transmit symbols \cite{holisticISAC_2024,OFDM_Radar_Corr_TAES_2020,Fan_PCS_TWC_2026,WF_LMMSE_TCOM_2023,MIMO_OFDM_ISAC_Waveform_2025_Lee}. The study in \cite{Zhang_mismatched_bistatic_SPAWC_2026} investigates mismatched bistatic OFDM ranging when a hard demodulated symbol sequence is used as the reference for \ac{MF}, and derives the expected delay sidelobe levels induced by SNR-dependent decision errors for BPSK, QPSK and square QAM. Moreover, \cite{bistaticISAC_conference_2025} empirically observes that \ac{RF} can underperform \ac{MF} in the presence of soft demodulation errors. However, these works do not jointly characterize the sidelobe behavior of \ac{RF} and \ac{MF}, and thus cannot establish how the RF-MF ordering depends on the \ac{DPU} level, SNR and data constellation. As a result, it remains unclear how an additive error \ac{DPU} model, applicable to both fronthaul quantization \cite{eckrich_fronthaul_radar_2024,zhang_fronthaul_beamforming_ISAC_2024,limited_fronthaul_2025} and soft demodulation residuals \cite{bistatic_ISAC_sync_2023,bistatic_OFDM_ISAC_OTA_2024,bistatic_sensing_5G_NR_2025}, affects the delay-Doppler sidelobe performance of both \ac{RF} and \ac{MF} for generic constellations, which estimator is inherently more robust to such uncertainty, and whether knowledge of the symbol error statistics can be exploited to suppress the resulting sidelobes.

To address this gap, we characterize and mitigate the impact of \ac{DPU} on sensing in OFDM-based bistatic ISAC systems. Our contributions are as follows: \textit{(i)} We derive closed-form \ac{PSLR} expressions for \ac{RF} and \ac{MF} under \ac{DPU} for arbitrary constellations and SNRs. The analysis reveals that the \ac{DPU}-induced sidelobe degradation of \ac{RF} depends on the inverse second-, fourth-, and sixth-order constellation moments, whereas the \ac{MF} result is governed by the direct fourth-order moment, revealing the substantially higher sensitivity of \ac{RF} to \ac{DPU}; \textit{(ii)} we establish analytical RF-MF ordering and monotonicity results with respect to SNR and DPU level; and   \textit{(iii)} we develop a \ac{DPU}-aware \ac{MMSE} channel estimator that jointly exploits the constellation prior and symbol error statistics. Numerical results validate the \ac{PSLR} analysis and demonstrate improved performance of \ac{MMSE} over \ac{RF} and \ac{MF}, closing the gap to a genie-aided estimator with perfect knowledge of the data payload.

\section{System Model}

We consider a bistatic OFDM ISAC system consisting of three entities, namely \textit{(i)} a multiple-antenna ISAC transmitter (ISAC-TX), \textit{(ii)} a single-antenna bistatic sensing receiver (S-RX) and \textit{(iii)} a single-antenna communication receiver (C-RX), as well as a number of targets in the environment, as shown in Fig.~\ref{fig_scenario}. ISAC-TX sends data symbols to C-RX for downlink communications, while S-RX simultaneously performs bistatic sensing using the received target echoes \cite{info_theory_bistatic_ISAC_2025_TIT,bistatic_data_aided_MIMO_OFDM_ISAC_2025}. In the bistatic scenario under consideration, the goal of S-RX is to detect the targets in the environment and estimate their delay and Doppler parameters under \ac{DPU}, i.e., using noisy estimates of data symbols. Meanwhile, C-RX carries out standard OFDM receive processing tasks \cite{demo_bistatic_ISAC,bistaticISAC_Clock_Async_2024}. 

\subsection{Transmit Signal Model}
ISAC-TX transmits an OFDM waveform with $N$ subcarriers and $M$ symbols. The total duration of each symbol is $\Tsym = \Tcp + T$, where $\Tcp$, $T$ and $\deltaf = 1/T$ denotes the \ac{CP} duration, the useful symbol duration and the subcarrier spacing, respectively \cite{RadCom_Proc_IEEE_2011}. Let $\XX \in \cset{N}{M}$ denote the transmit symbol matrix. The entries of $\XX$ are i.i.d. complex random variables represented by $x$, drawn from a given discrete constellation with unit average power, i.e., $\Eee\{ \abss{x}^2 \} = 1$.

\subsection{Signal Model at Bistatic Sensing Receiver}
In the bistatic sensing channel between ISAC-TX and S-RX, we consider the presence of $K$ targets, along with a \ac{LOS} path indexed by $k=0$. The $\thn{k}$ path is characterized by a complex channel gain $\alpha_k$ (including the effects of bistatic \ac{RCS} for $k>0$, TX beamforming gain $\atx^\trp(\theta_k) \fftx$ and path attenuation), a delay $\tau_k$, a Doppler shift $\nu_k$ and an \ac{AOD} $\theta_k$. Here, $\atx(\theta) \in \cset{\Ntx}{1}$ is the steering vector of the \ac{ULA} at ISAC-TX, $\fftx \in \cset{\Ntx}{1}$ is the TX beamforming vector and $\Ntx$ is the number of antenna elements at the ISAC-TX array.

After removing the \ac{CP} and performing an $N$-point \ac{DFT} over fast-time samples for each OFDM symbol, the frequency-domain observation matrix at S-RX over $M$ symbols is given by 
\begin{align} \label{eq_YY}
    \YY &= \XX \odot \HH + \ZZ \in \cset{N}{M} \,,
\end{align}
where $\odot$ denotes Hadamard (element-wise) multiplication, $\ZZ \in \cset{N}{M}$ is \ac{AWGN} matrix with i.i.d entries $\znm \triangleq [\ZZ]_{n,m} \sim \mtCN(0, \sigma^2) $, $\sigma^2 = N_0 N \deltaf N_F$ with $N_0$ and $N_F$ representing the noise \ac{PSD} and the noise figure, respectively, and
\vspace{-0.1in}
\begin{align} \label{eq_hh}
    \HH &= \sum_{k=0}^{K}   \alpha_{k} \,  \bb(\tau_{k}) \cc^\trp(\nu_{k}) \in \cset{N}{M}  \,
    \vspace{-0.5in}
\end{align}
denotes the bistatic sensing channel\footnote{Here, $\tau_k$ and $\nu_k$ denote the delay and Doppler of the $\thn{k}$ path biased by \ac{TO} and \ac{CFO}, respectively, due to TX-RX asynchronism \cite{bistaticISAC_Clock_Async_2024}. We focus on their estimation, leaving synchronization and target localization to subsequent processing.} in the frequency/slow-time domain \cite{LiuSeventy23,SPM_ISAC_multicarrier_2024}. In \eqref{eq_hh}, $\bb(\tau) \in \cset{N}{1}$ with $[\bb(\tau)]_n = e^{-j 2 \pi n \deltaf  \tau}$ and $\cc(\nu) \in \cset{M}{1}$ with  $[\cc(\nu)]_m = e^{j 2 \pi m \Tsym \nu }$ represent, respectively, the frequency-domain and slow-time-domain steering vectors. In obtaining \eqref{eq_YY}, we make the standard assumption $\Tcp \geq \max_k \tau_k - \min_k \tau_k$, i.e., the delay spread of the sensing channel is smaller than \ac{CP} \cite{SPM_JRC_2019,SPM_ISAC_multicarrier_2024}.

\begin{figure}[t]
	\centering
	\includegraphics[width=0.9\linewidth]{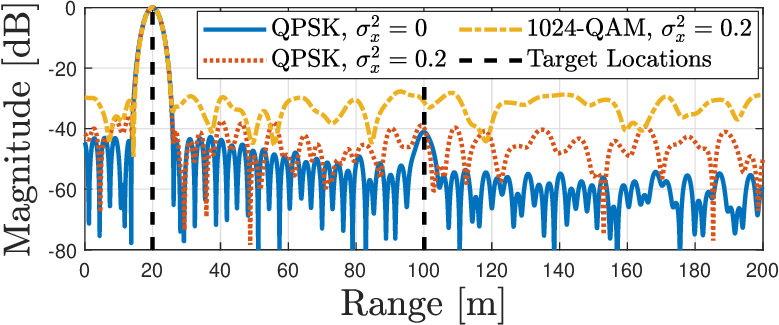}
	\vspace{-0.05in}
	\caption{\footnotesize  Range profiles obtained via conventional reciprocal filtering (RF) based OFDM channel estimation \cite{OFDM_Radar_Corr_TAES_2020,holisticISAC_2024}, with and without \ac{DPU} for QPSK and $1024$-QAM constellations, using $N=400$ and the remaining parameters specified in Sec.~\ref{sec_num}. The scenario contains two targets with locations $[56.9, 10] \, \rm{m}$ and $[81.2, 46.9] \, \rm{m}$, velocities $[1.4, -2.2] \, \rm{m/s}$ and $[3, -3.2] \, \rm{m/s}$, and RCSs $25 \, \rm{dBsm}$ and $15 \, \rm{dBsm}$. The range values represent the differential ranges with respect to the \ac{LOS} path.} 
	\label{fig_range_profile_DPU}
\end{figure}

\subsection{Goal of Bistatic Sensing Receiver}\label{sec_goal_sensing}
At S-RX, the transmit data symbols $\XX$ are \textit{imperfectly known} due to \ac{DPU}, which may result from, for example, limited fronthaul capacity \cite{eckrich_fronthaul_radar_2024,zhang_fronthaul_beamforming_ISAC_2024,limited_fronthaul_2025} or demodulation errors in data-aided sensing \cite{OFDM_Passive_Res_2017_TSP,bistatic_OFDM_ISAC_OTA_2024,bistaticISAC_conference_2025,bistatic_sensing_5G_NR_2025}. To model the impact of \ac{DPU}, we assume that a noisy estimate $\XXhat$ of $\XX$ is available at S-RX such that
\begin{align}\label{eq_xxhat_delta}
    \XXhat = \XX + \XXdelta \,,
\end{align}
where $\XXdelta \in \cset{N}{M}$ contains the errors in data symbols, with $\vecc{\XXdelta} \sim \mtCN(\boldzero, \sigmax^2  \Imatrix)$, and $\sigmax^2$ quantifies the level of \ac{DPU}\footnote{From the limited fronthaul perspective, the error variance $\sigmax^2$ in \eqref{eq_xxhat_delta} can be approximated by the scalar complex Gaussian rate-distortion function as $\sigmax^2 = 2^{-R_f/(NM)}$ \cite[Ch.~10]{cover_thomas_2006}, where $R_f$ is the total fronthaul budget in bits for the entire OFDM frame. From the perspective of data demodulation errors, the model is grounded in the \ac{MMSE} soft symbol estimate framework of \cite{tuchler_mmse_apriori_2002}. In particular, the estimation residual from the \ac{LLR} outputs of the decoder is Gaussian under the standard Gaussian consistency assumption of turbo processing. In both cases, the i.i.d. structure of $\XXdelta$ follows from random interleaving in \ac{BICM} \cite{caire_bicm_1998}.}. We denote the entries of $\XXdelta$, which are i.i.d. complex random variables, by $\xdelta$, which is assumed to be independent of $x$.

Given the observations $\YY$ in \eqref{eq_YY} and the noisy data symbols $\XXhat$ in \eqref{eq_xxhat_delta} at S-RX, the goal is to detect the presence of multiple targets in the sensing channel $\HH$ in \eqref{eq_hh} and estimate their delay-Doppler parameters. To illustrate the impact of \ac{DPU}, we plot in Fig.~\ref{fig_range_profile_DPU} the range profiles at S-RX using different data constellations, which reveal the target masking effect of \ac{DPU}, leading to degradation in detection performance. To quantify \ac{DPU}-induced performance loss, we next analyze sidelobe levels achieved by common OFDM \acp{SCE}.

\section{Sidelobe Level Analysis under Data Payload Uncertainty}\label{sec_pslr}
In this section, we perform \ac{PSLR} analysis of RF and MF, two common OFDM \acp{SCE}, to provide theoretical characterization of delay-Doppler ambiguity sidelobe levels at S-RX under \ac{DPU}.
We consider the delay-Doppler response $\chi(\tau, \nu)$ of a target with channel gain $\alpha$, delay $\tau_0$ and Doppler shift $\nu_0$, yielding the sensing channel $\HH = \alpha \, \bb(\tau_0) \cc^\trp(\nu_0) \in \cset{N}{M}$ according to \eqref{eq_hh}.
Then, the PSLR is given by \cite{holisticISAC_2024}
\begin{align}\label{eq_pslr_main}
    \pslr = \frac{ \Eee\{ \abss{ \chi(\tau_0, \nu_0) }^2 \}  }{ \Eee\{ \abss{ \chi(\tau, \nu) }^2 \} } \,, \, (\tau, \nu) \in \Rside \,,
\end{align}
where $\Rside = \{(\tau,\nu)~\lvert~\abss{\tau-\tau_0}\gg\Delta\tau,\ \abss{\nu-\nu_0}\gg\Delta\nu\}$ defines the asymptotic side-lobe region, with $\Delta\tau$ and $\Delta\nu$ denoting the delay and Doppler resolutions, respectively. In \eqref{eq_pslr_main}, the delay-Doppler response corresponding to the observation $\YY$ in \eqref{eq_YY} is given by
\begin{align}\label{eq_2d_fft_res_main}
     \chi(\tau, \nu) = \frac{1}{NM} \bb^H(\tau) ( \YY \odot \boldG ) \cc^\conj(\nu) \,,
\end{align}
where $\YY \odot \boldG$ represents the sensing channel estimate obtained using RF or MF. Here, $\boldG \in \cset{N}{M}$ denotes the equalizer/compensator matrix given by 
\begin{align} \label{eq_G_main}
    [\boldG]_{n,m} = \begin{cases}
        1/[\XXhat]_{n,m} \,, & {\rm{RF}}
        \\
        [\XXhat]_{n,m}^\conj \,, & {\rm{MF}}
    \end{cases} \,,
\end{align}
where S-RX relies on noisy estimates of data symbols, $\XXhat$ in \eqref{eq_xxhat_delta}, for equalization as the true data symbols $\XX$ are not available. The following proposition provides the main \ac{PSLR} result.


\begin{proposition}\label{prop_pslr_all}
    The \ac{PSLR} in \eqref{eq_pslr_main} achieved by the considered estimators under \ac{DPU} is given by
\begin{align} \nonumber
    \pslrrf &= 
1 + NM \Big[    
    \snri \mu_{-2} \\ \nonumber
    + &\underbrace{ \sigmax^2 \Big( \mu_{-2} + \snri \mu_{-4} + 2\sigmax^2 \big( \mu_{-4} + \snri \mu_{-6} \big)  \Big)}_{{\rm{sidelobe~increase~due~to~data~payload~uncertainty}}}  
    \Big]^{-1} 
    \\ \nonumber
    \pslrmf &= 1 + NM \Big[   \snri  +  \mu_{4}-1 + \underbrace{\sigmax^2 \big( 1 + \snri \big)}_{\substack{\rm{sidelobe~increase~due~to} \\ \rm{data~payload~uncertainty}}}    \Big]^{-1}    
\end{align}
where $\mu_{\ell} \triangleq \Eee\{\abss{x}^{\ell}\}$, $\snr \triangleq \abss{\alpha}^2/\sigma^2$, $x$ represents the transmit data symbol modeled as a complex random variable from a given constellation with unit average power, i.e., $\mu_2 = 1$, and $\sigmax^2$ denotes the variance of error in estimates of data symbols (i.e., the level of \ac{DPU}), assumed to satisfy $\abss{\xdelta} \ll \abss{x}$ in \eqref{eq_xxhat_delta}.
\end{proposition}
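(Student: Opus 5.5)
The plan is to reduce both PSLR expressions to a single generic formula that depends only on a few moments of the per-resource-element "effective symbol" produced by the equalizer $\boldG$ in \eqref{eq_G_main}. Each of the two moments is then evaluated for RF and MF.

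\textbf{Step 1 (generic form).} For the single-target channel $\HH=\alpha\,\bb(\tau_0)\cc^\trp(\nu_0)$, the $(n,m)$ entry of $\YY\odot\boldG$ is
\begin{align*}
\alpha\, a_{n,m}\,[\bb(\tau_0)]_n[\cc(\nu_0)]_m + z_{n,m} g_{n,m},
\end{align*}
where $g_{n,m}=[\boldG]_{n,m}$ and $a_{n,m}\triangleq x_{n,m}g_{n,m}$. I will split $a_{n,m}=\bar a+\tilde a_{n,m}$ with $\bar a\triangleq\Eee\{xg\}$. The $\tilde a_{n,m}$ and $z_{n,m}g_{n,m}$ are zero-mean and i.i.d. across $(n,m)$, and mutually uncorrelated because $z$ is independent and zero-mean. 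Inserting this into \eqref{eq_2d_fft_res_main}, the deterministic part $\alpha\bar a$ produces the ideal 2D-DFT kernel, which equals $\alpha\bar a$ at $(\tau_0,\nu_0)$ and is asymptotically negligible in $\Rside$. The random part contributes $\frac{1}{NM}V$ to $\Eee\{|\chi|^2\}$ at every $(\tau,\nu)$, since the steering entries are unit-modulus. Here
\begin{align*}
V=|\alpha|^2\mathrm{Var}(xg)+\sigma^2\Eee\{|g|^2\}.
\end{align*}
Hence
\begin{align*}
\pslr = 1+NM\,\big[(\mathrm{Var}(xg)+\snri\,\Eee\{|g|^2\})/|\Eee\{xg\}|^2\big]^{-1},
\end{align*}
so it remains to compute $\Eee\{xg\}$, $\Eee\{|xg|^2\}$ and $\Eee\{|g|^2\}$.

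\textbf{Step 2 (MF).} Here $g=x^\conj+\xdelta^\conj$, so $xg=|x|^2+x\xdelta^\conj$. Independence and zero mean of $\xdelta$ give $\Eee\{xg\}=\mu_2=1$ and $\Eee\{|xg|^2\}=\mu_4+\sigmax^2$, hence $\mathrm{Var}(xg)=\mu_4-1+\sigmax^2$. Likewise $\Eee\{|g|^2\}=1+\sigmax^2$. Substituting into Step 1 yields $\pslrmf$ exactly, with no approximation needed.

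\textbf{Step 3 (RF).} Here $g=1/(x+\xdelta)=x^{-1}(1+d)^{-1}$ with $d\triangleq\xdelta/x$. Under $|\xdelta|\ll|x|$ I will expand $(1+d)^{-1}=\sum_k(-d)^k$. Conditioned on $x$, $d$ is circular complex Gaussian with variance $\sigmax^2/|x|^2$, so $\Eee\{d^k\bar d^{\,l}\mid x\}=0$ for $k\neq l$ and $\Eee\{|d|^{2k}\mid x\}=k!\,\sigmax^{2k}|x|^{-2k}$. This gives $\Eee\{xg\}=1$ and, keeping terms through $\sigmax^4$,
\begin{align*}
\Eee\{|xg|^2\}&\approx 1+\sigmax^2\mu_{-2}+2\sigmax^4\mu_{-4},\\
\Eee\{|g|^2\}&\approx\mu_{-2}+\sigmax^2\mu_{-4}+2\sigmax^4\mu_{-6}.
\end{align*}
Substituting into Step 1 reproduces $\pslrrf$.

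\textbf{Main obstacle.} The delicate step is justifying the truncation in Step 3. Strictly, $\Eee\{|1/(x+\xdelta)|^2\}$ diverges for Gaussian $\xdelta$, so the result must be understood as a formal small-$\sigmax$ expansion valid under the stated condition $|\xdelta|\ll|x|$. I will state explicitly that terms of order $\sigmax^6$ and higher are dropped; this matches the orders appearing in the proposition. A secondary point is arguing that the coherent term vanishes in $\Rside$, i.e.\ that $\frac{1}{NM}\bb^H(\tau)\bb(\tau_0)\cc^\trp(\nu_0)\cc^\conj(\nu)\to0$ away from the mainlobe. I will take this as the standard asymptotic sidelobe convention underlying \eqref{eq_pslr_main}.
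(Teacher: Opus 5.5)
Your proposal is correct and follows essentially the same route as the paper, which evaluates the expectations in \eqref{eq_pslr_main} directly and applies a fourth-order Taylor expansion in $\xdelta$ around zero; your reduction to $\Eee\{xg\}$, $\mathrm{Var}(xg)$ and $\Eee\{\abss{g}^2\}$, combined with the conditional circular-Gaussian moments $\Eee\{\abss{d}^{2k}\mid x\}=k!\,(\sigmax^2/\abss{x}^2)^k$, reproduces both expressions exactly. Your caveat is also correct: $\Eee\{\abss{1/(x+\xdelta)}^2\}$ is strictly infinite for Gaussian $\xdelta$, so the RF result is a formal small-$\sigmax$ expansion, which the paper covers only implicitly through the assumption $\abss{\xdelta}\ll\abss{x}$.
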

\begin{proof}
    The proof follows by evaluating the expectations in \eqref{eq_pslr_main} by inserting \eqref{eq_YY}, \eqref{eq_xxhat_delta}, \eqref{eq_2d_fft_res_main} and \eqref{eq_G_main}, and applying a $\thn{4}$ order Taylor expansion of their arguments around $\xdelta = 0$, assuming $\abss{\xdelta} \ll \abss{x}$. 
\end{proof}

\begin{remark}[Vulnerability of RF to \ac{DPU}]\label{remark_vuln_rf}
    Prop.~\ref{prop_pslr_all} reveals the significantly higher vulnerability of RF to DPU than MF. The \ac{DPU}-induced sidelobe increase for RF scales with \textit{inverse} second-, fourth- and sixth-order constellation moments, whereas for MF it depends only on the SNR, with its overall \ac{PSLR} governed by the \textit{direct} fourth-order moment. Table~\ref{tab_constellation_moments} shows these moments for different constellations. It is seen that $\mu_4$ increases only mildly with modulation order $M$, whereas the inverse moments grow very rapidly for higher-order $M$-QAM due to constellation points increasingly close to the origin. Hence, the \ac{PSLR} degradation of RF with increasing $M$ becomes more severe as the \ac{DPU} level $\sigmax^2$ grows. In contrast, MF exhibits a milder degradation with $M$ that is independent of $\sigmax^2$, making it much more robust to \ac{DPU}.
\end{remark}

\begin{table}[t]
\centering
\caption{Constellation moments for unit average power.}
\label{tab_constellation_moments}
\footnotesize
\setlength{\tabcolsep}{2.5pt}
\begin{tabular}{c|cccc}
\hline
Modulation & $\mu_4$ & $\mu_{-2}$ & $\mu_{-4}$ & $\mu_{-6}$ \\
\hline
PSK       & 1      & 1      & 1        & 1 \\
16-QAM    & 1.3200 & 1.8889 & 6.8272   & 31.7929 \\
64-QAM    & 1.3810 & 2.6854 & 30.8743  & 589.8164 \\
256-QAM   & 1.3953 & 3.4371 & 127.2873 & 9779.2608 \\
1024-QAM  & 1.3988 & 4.1716 & 512.9992 & 157854.0289 \\
\hline
\end{tabular}
\end{table}

\begin{corollary}\label{prop_mf_monotone}
    The difference between the sidelobe floors (i.e., the sea levels \cite{sensWithComm_Fan_JSAC_2026}) of RF and MF, defined as $(\pslrrf-1)^{-1} - (\pslrmf-1)^{-1}$, is monotonically increasing with respect to $\sigmax^2$ and decreasing with respect to SNR. 
\end{corollary}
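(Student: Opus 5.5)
From Prop.~\ref{prop_pslr_all}, $(\pslr-1)^{-1}$ is exactly $1/(NM)$ times the bracketed term. The sidelobe-floor difference is therefore, up to the positive factor $1/(NM)$,
\begin{align*}
D(\sigmax^2,\snr) &= \snri \mu_{-2} + \sigmax^2\big(\mu_{-2} + \snri \mu_{-4}\big) + 2\sigmax^4\big(\mu_{-4}+\snri\mu_{-6}\big) \\
&\quad - \snri - \mu_4 + 1 - \sigmax^2\big(1+\snri\big) .
\end{align*}
The plan is to treat $D$ as a polynomial in $s\triangleq\sigmax^2\ge 0$ and in $t\triangleq\snri>0$. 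Monotonicity in $\sigmax^2$ and in $\snr$ then reduces to sign conditions on the partial derivatives.

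For the dependence on $\sigmax^2$, we differentiate:
\begin{align*}
\partial D/\partial s = (\mu_{-2}-1) + t(\mu_{-4}-1) + 4s(\mu_{-4}+t\mu_{-6}).
\end{align*}
The key ingredient is the moment inequality $\mu_{-2k}\ge 1$ for $k\ge1$, given $\mu_2=1$.
\begin{itemize}
\item Apply Jensen's inequality to the convex function $u\mapsto u^{-k}$ on $u=\abss{x}^2>0$. This gives $\Eee\{\abss{x}^{-2k}\}\ge(\Eee\{\abss{x}^2\})^{-k}=1$.
\item Equality holds only for constant-modulus constellations (PSK), which matches Table~\ref{tab_constellation_moments}.
\end{itemize}
With this, every term of $\partial D/\partial s$ is nonnegative. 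The last term is strictly positive whenever $s>0$, since $\mu_{-4}>0$. Hence $D$ is strictly increasing in $\sigmax^2$.

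For the dependence on $\snr$, we use $t=\snri$:
\begin{align*}
\partial D/\partial t = (\mu_{-2}-1) + s(\mu_{-4}-1) + 2s^2\mu_{-6} \ge 0 ,
\end{align*}
again by $\mu_{-2k}\ge1$. The inequality is strict whenever $s>0$, because of the term $2s^2\mu_{-6}>0$. Since $t$ is decreasing in $\snr$, $D$ is decreasing in $\snr$. The factor $1/(NM)$ does not affect either conclusion.

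I expect the main obstacle to be strictness in the degenerate cases rather than the computation itself. For PSK at $s=0$, $\partial D/\partial t=0$, so the statement holds only in the weak (non-strict) sense there. I would state this explicitly, and invoke Jensen's inequality with its equality condition to justify each sign. This establishes Corollary~\ref{prop_mf_monotone} directly from Prop.~\ref{prop_pslr_all}.
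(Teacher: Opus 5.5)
Your proposal is correct and follows essentially the paper's route. The paper also reduces the claim to nonnegativity of the coefficients in the difference, using $\mu_{-4}\ge\mu_{-2}^2\ge 1$ (Cauchy--Schwarz and Jensen), while you obtain $\mu_{-2}\ge 1$ and $\mu_{-4}\ge 1$ directly from Jensen applied to $u\mapsto u^{-k}$, which is the same ingredient. Your added remark is also correct and goes beyond the paper: for PSK at $\sigmax^2=0$ the difference is identically zero, so the monotonicity in SNR is only weak there.
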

\begin{proof}
  The proof follows from $\mu_{-4} \geq \mu_{-2}^2 \geq 1 $ based on Jensen's and Cauchy-Schwarz inequalities.
\end{proof}
\begin{remark}[Robustness of MF]
    Cor.~\ref{prop_mf_monotone} implies that lowering SNR and increasing \ac{DPU} level always favor MF over RF in their relative comparison, which indicates the robustness of MF under severe conditions with regard to noise and \ac{DPU}. 
\end{remark}

\begin{corollary}\label{cor_rf_mf_comparison}
    With PSK constellations, $\pslrmf > \pslrrf$ always holds under \ac{DPU}, i.e., $\sigmax^2 > 0$.
\end{corollary}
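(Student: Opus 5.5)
We specialize the closed-form expressions of Prop.~\ref{prop_pslr_all} to PSK constellations and compare the bracketed sidelobe-floor terms directly. Since both PSLRs have the form $1 + NM\,[D]^{-1}$ with $D>0$, the claim $\pslrmf > \pslrrf$ is equivalent to $D_{\rm{MF}} < D_{\rm{RF}}$. The plan is therefore to write out both denominators and compare them.

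First we record the PSK moments. For PSK every symbol satisfies $\abss{x}=1$, so $\mu_\ell = 1$ for all $\ell$; this is also the PSK row of Table~\ref{tab_constellation_moments}. In particular $\mu_{-2}=\mu_{-4}=\mu_{-6}=\mu_4=1$. Substituting into Prop.~\ref{prop_pslr_all} gives
\begin{align*}
D_{\rm{RF}} &= \snri + \sigmax^2\big(1+\snri\big) + 2\sigmax^4\big(1+\snri\big), \\
D_{\rm{MF}} &= \snri + (1-1) + \sigmax^2\big(1+\snri\big) = \snri + \sigmax^2\big(1+\snri\big).
\end{align*}

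Next we take the difference:
\begin{align*}
D_{\rm{RF}} - D_{\rm{MF}} = 2\sigmax^4\big(1+\snri\big).
\end{align*}
This is strictly positive whenever $\sigmax^2>0$, since $\snr$ is finite and positive. Hence $D_{\rm{RF}} > D_{\rm{MF}} > 0$, and because $t \mapsto 1 + NM/t$ is strictly decreasing on $t>0$, we obtain $\pslrmf > \pslrrf$.

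The argument is essentially mechanical, so there is no real obstacle. The one point worth stating explicitly is that the zero-DPU case $\sigmax^2=0$ gives equality, which is why strictness requires $\sigmax^2>0$. One could also note that the result inherits the small-error validity regime $\abss{\xdelta}\ll\abss{x}$ of Prop.~\ref{prop_pslr_all}.
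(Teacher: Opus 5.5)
Your proposal is correct and matches the paper's proof, which likewise substitutes $|x|=1$ (all moments equal to one) into Prop.~\ref{prop_pslr_all} and compares the two sidelobe floors. The RF floor exceeds the MF floor by $2\sigma_x^4(1+\mathrm{SNR}^{-1})>0$; since this gap is at least $2\sigma_x^4$, strictness does not even require the SNR to be finite.
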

\begin{proof}
The proof follows by inserting $\abss{x} = 1$ in Prop.~\ref{prop_pslr_all}. 
\end{proof}
\begin{remark}[Superiority of MF over RF under PSK]
    Unlike the \ac{DPU}-free, monostatic result showing the equivalence of RF and MF with PSK constellations \cite{holisticISAC_2024}, Cor.~\ref{cor_rf_mf_comparison} establishes the superiority of MF over RF when employing PSK under \ac{DPU} in bistatic scenarios. 
\end{remark}

\section{DPU-Aware Estimator Design via MMSE}\label{sec_mmse}
The common \acp{SCE} considered in Sec.~\ref{sec_pslr} do not account for the presence of \ac{DPU}. In this section, we propose an \ac{MMSE} based \ac{SCE} design that takes into account \ac{DPU} and the a-priori statistics of data symbols to improve the resulting sidelobe performance.

To derive the \ac{MMSE} estimate $\HHhatmmse \in \cset{N}{M}$ of the sensing channel $\HH$ in \eqref{eq_hh}, we assume $\alpha_{k} \sim \mtCN(0, \sigmaa^2), \tau_k \sim \mathcal{U}[0, 1/\deltaf], \nu_k \sim \mathcal{U}[0, 1/ \Tsym] $, with the different targets and parameters distributed independently \cite[Lemma~1]{holisticISAC_2024}. Given these statistics of $\HH$, the observations $\YY$ in \eqref{eq_YY} and the erroneous data symbols $\XXhat$ in \eqref{eq_xxhat_delta} at S-RX,
where each entry of $\XX$ is i.i.d. transmit symbol drawn from a finite alphabet $\mathcal{X}=\{x_1,x_2,\ldots,x_S\}$ with the discrete input distribution $\pp = [p_1, \ldots, p_S]^\trp$, each entry of $\HHhatmmse$ is given by \cite[Ch.~11]{kay1993fundamentals}
\begin{align} \label{eq_h_mmse}
    \hhatmmse = \sum_{s=1}^{S} w_{s} \frac{ x_{s}^\conj y}{\abss{x_{s}}^2 + \snri } \,,
\end{align}
where $\snr \triangleq \sigmaaa^2/\sigma^2$, $\sigmaaa^2 \triangleq \sum_{k=0}^K\sigmaa^2$ and
\begin{align} \label{eq_well}
    w_{s} &\triangleq p( x = x_{s} \, \lvert \, y, \xhat) \,, \\ \nonumber
    &= \frac{p_{s} \, \mtCN(\xhat; x_{s}, \sigmax^2)  \mtCN(y; 0, \abss{x_{s}}^2 \sigmaaa^2 + \sigma^2)   }{ \sum_{s=1}^{S} p_{s} \,  \mtCN(\xhat; x_{s}, \sigmax^2)  \mtCN(y; 0, \abss{x_{s}}^2 \sigmaaa^2 + \sigma^2)   } \,.
\end{align}
In \eqref{eq_h_mmse} and \eqref{eq_well}, $\xhat$ and $y$ denote the corresponding entries of $\XXhat$ and $\YY$. To obtain the element-wise MMSE estimator in \eqref{eq_well}, we adopt the Gaussian model $\vecc{\HH} \sim \mtCN(\boldzero, \sigmaaa^2 \Imatrix)$, whose mean and covariance are exact under the stated priors on channel parameters.

\section{Numerical Results}\label{sec_num}
In this section, we perform numerical investigation of \acp{PSLR} derived in Sec.~\ref{sec_pslr} under various operation regimes and assess the sensing performance of the proposed MMSE estimator in Sec.~\ref{sec_mmse}. The default simulation parameters are listed in Table~\ref{tab_SimParam}. We assume the presence of $K=2$ targets in \eqref{eq_hh} along with the LOS path\footnote{We only focus on the target-reflected, \ac{NLOS} paths in \eqref{eq_hh}, assuming the \ac{LOS} path can be estimated and subtracted from \eqref{eq_YY} \cite{data_aided_multistatic_ISAC_2026}.}, where the channel gains are generated according to the bistatic radar equation \cite[Eq.~(2.9)]{Scheer2010RadarEquation}. The locations of ISAC-TX and S-RX are $[0, 0] \, \rm{m}$ and $[50, 0] \, \rm{m}$, while the targets are  located at $[50.3, 8.9] \, \rm{m}$ and $[119.4, 32] \, \rm{m}$ with the respective velocities $[6, -5.2] \, \rm{m/s}$ and $[20, -55.3] \, \rm{m/s}$ and the respective \acp{RCS} $28 \, \rm{dBsm}$ and $\sigmarcs$. To evaluate detection performance, we sweep $\sigmarcs$, employ a \ac{CFAR} detector \cite[Ch.~6.2.4]{richards2005fundamentals} on the range-Doppler plane in \eqref{eq_2d_fft_res_main}, with the probability of false alarm $10^{-4}$, and consider $500$ independent Monte Carlo realizations in \eqref{eq_YY} and \eqref{eq_xxhat_delta}, where each realization corresponds to an \ac{AWGN} matrix $\ZZ$ and a \ac{DPU} matrix $\XXdelta$. We compare the following schemes: \textit{(i) Genie:} the LMMSE estimator with perfect knowledge of $\XX$ (DPU-free baseline with $\sigmax^2 = 0$) \cite{holisticISAC_2024}, \textit{(ii) MMSE:} the proposed DPU-aware estimator in Sec.~\ref{sec_mmse}, \textit{(iii) RF} and \textit{(iv) MF} in \eqref{eq_G_main}.




\begin{table}[t]\footnotesize
\caption{Simulation Parameters}
\centering
\begin{tabular}{ll}
\hline
Carrier frequency & $\fc = 28$ GHz\\
Subcarriers / spacing & $N=1668$, $\deltaf=120$ kHz\\
Useful / CP duration & $T=8.33~\mu$s, $\Tcp=1.16~\mu$s\\
Symbols, transmit power & $M = 60$, $\Pt = 30$ dBm\\
ULA size at ISAC-TX & $\Ntx = 8$ \\
Noise figure / \ac{PSD} & $N_F = 8$ dB, $N_0 = -174$ dBm/Hz \\
\hline
\end{tabular}
\label{tab_SimParam}
\end{table}



\begin{figure}
        \begin{center}
        \subfigure[]{
			 \label{fig_DPU}
			 \includegraphics[width=0.85\columnwidth]{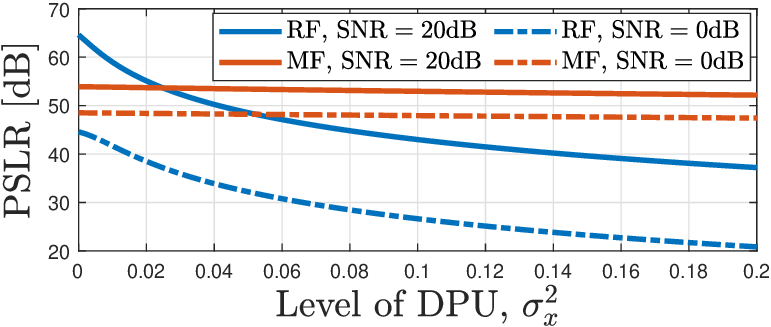}  
		}
        \hspace*{0.02\columnwidth}
        \subfigure[]{
			 \label{fig_mod_order}
			 \includegraphics[width=0.9\columnwidth]{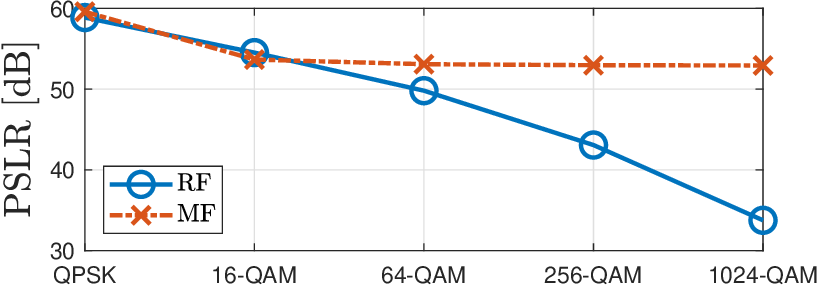}
		}		
		\end{center}
		\vspace{-0.1in}
        \caption{\footnotesize \subref{fig_DPU} PSLR against $\sigmax^2$ for $256$-QAM transmission under different SNRs, and \subref{fig_mod_order} PSLR against modulation order for $\sigmax^2 = 0.1$ and $\snr = 20 \, \rm{dB}$.}  
        \label{fig_pslr_dpu_mod_order}
\end{figure}

\subsection{PSLR Analysis}
To compare the sidelobe performance of \ac{RF} and \ac{MF} under \ac{DPU}, Fig.~\ref{fig_DPU} shows the \acp{PSLR} in Prop.~\ref{prop_pslr_all} as a function of the \ac{DPU} level $\sigmax^2$ for $256$-QAM transmission at two SNRs. The \ac{PSLR} of \ac{RF} degrades sharply with increasing $\sigmax^2$ and decreasing SNR, whereas \ac{MF} remains almost insensitive to $\sigmax^2$ and exhibits a much smaller degradation as SNR decreases. These results demonstrate that \ac{MF} is significantly more robust than \ac{RF} to both increasing \ac{DPU} and decreasing SNR, in agreement with Cor.~\ref{prop_mf_monotone}. Moreover, Fig.~\ref{fig_mod_order} compares the \acp{PSLR} across different constellations, showing that RF degrades far faster than MF with increasing modulation order, in alignment with Remark~\ref{remark_vuln_rf}, Table~\ref{tab_constellation_moments} and the illustrative example in Fig.~\ref{fig_range_profile_DPU}.

\subsection{Detection Performance}
To investigate how the \ac{PSLR} behavior translates into detection performance, Fig.~\ref{fig_RCS} shows the probability of detection ($\pd$) of the second target as a function of its \ac{RCS} $\sigmarcs$ for $\sigmax^2 = 0.1$ and QPSK transmission. The proposed \ac{DPU}-aware \ac{MMSE} estimator significantly outperforms the DPU-ignorant benchmarks, \ac{RF} and \ac{MF}, which demonstrates its effectiveness in suppressing sidelobes and mitigating target masking. This improvement is enabled by exploiting prior statistics of the data symbols and \ac{DPU}, at the cost of increased computational complexity in \eqref{eq_h_mmse} relative to \eqref{eq_G_main}. The \ac{MMSE} estimator recovers a substantial portion of the DPU-induced performance loss, approaching the performance of the genie-aided estimator. Moreover, in agreement with Cor.~\ref{cor_rf_mf_comparison} and Fig.~\ref{fig_mod_order}, \ac{MF} achieves a higher $\pd$ than \ac{RF}.

\begin{figure}
        \begin{center}
        \subfigure[]{
			 \label{fig_RCS}
			 \includegraphics[width=0.85\columnwidth]{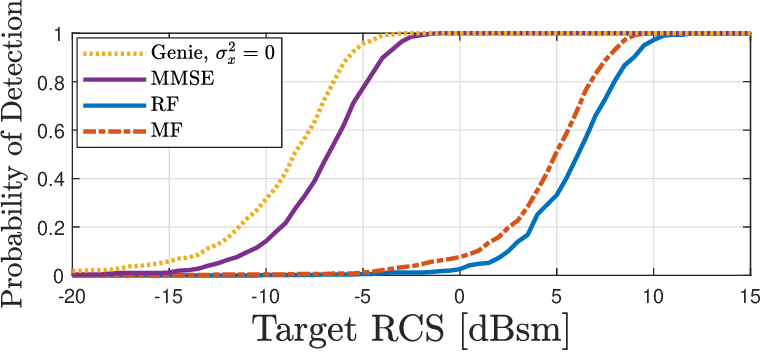}  
		}
        \hspace*{0.02\columnwidth}
        \subfigure[]{
			 \label{fig_min_det_RCS}
			 \includegraphics[width=0.9\columnwidth]{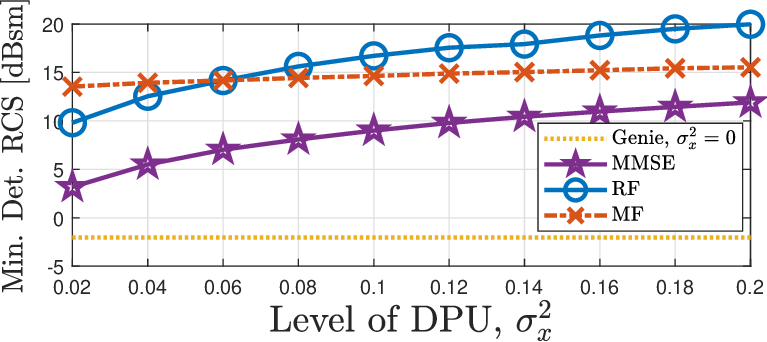}
		}		
		\end{center}
		\vspace{-0.1in}
        \caption{\footnotesize \subref{fig_RCS} Probability of detection with respect to the RCS of the second target for $\sigmax^2 = 0.1$ and QPSK transmission, and \subref{fig_min_det_RCS} Minimum detectable RCS $\sigmarcsmin(\pd^{\star}, \sigmax^2)$ for $\pd^{\star} = 0.9$ with $256$-QAM transmission.}  
        \label{fig_det_perf}
        \vspace{0.05in}
\end{figure}



We finally explore the impact of varying $\sigmax^2$ on detection performance by plotting the minimum detectable \ac{RCS} for a given $\pd^{\star}$, defined as $\sigmarcsmin(\pd^{\star}, \sigmax^2)
\triangleq
\min\left\{
\sigmarcs :
\pd(\sigmarcs, \sigmax^2) \ge \pd^{\star}
\right\}$ \cite{minDetRCS_2019}. Fig.~\ref{fig_min_det_RCS} reports $\sigmarcsmin(\pd^{\star}, \sigmax^2)$ for $\pd^{\star} = 0.9$ as a function of $\sigmax^2$ with $256$-QAM transmission. Consistent with Remark~\ref{remark_vuln_rf}, Cor.~\ref{prop_mf_monotone} and Fig.~\ref{fig_DPU}, \ac{RF} is much more vulnerable than \ac{MF} to increasing \ac{DPU}, leading to an \ac{RCS} loss of $10 \, \rm{dBsm}$ and $2 \, \rm{dBsm}$, respectively, as $\sigmax^2$ increases by a factor of $10$. The \ac{MMSE} estimator maintains the best overall detection performance although it degrades more rapidly with increasing \ac{DPU} than \ac{MF}.

\section{Conclusion}
We have investigated \ac{DPU} in bistatic ISAC systems by performing a \ac{PSLR} analysis of RF and MF and designing a new \ac{DPU}-aware \ac{SCE} based on \ac{MMSE}. The results highlight the significant potential of MF and MMSE under severe \ac{DPU} conditions, indicating their effectiveness in payload-based sensing in practical distributed ISAC systems with limited fronthaul links. Future research will focus on constellation shaping to optimize sidelobe performance under \ac{DPU}.

\clearpage
\balance 
\bibliographystyle{Sub/IEEEbib}
\bibliography{IEEEabrv,Sub/isac}
\end{document}